\newtheorem{theorem}{Theorem}
\newtheorem{proof}{Proof}
\newtheorem{remark}{Remark}
\newtheorem{assumption}{Assumption}
\begin{document}
\begin{frontmatter}

\title{Adaptive Identification of Nonlinear Time-delay Systems Using Output Measurements\thanksref{footnoteinfo}}
% Title, preferably not more than 10 words.

\thanks[footnoteinfo]{The results were developed under support of RSF (grant 18-79-10104) in IPME RAS.}

\author[First]{Igor Furtat}
\author[Third]{Yury Orlov}
%\author[Third]{Third C. Author}

\address[First]{Institute for Problems of Mechanical Engineering Russian Academy of Sciences, 61 Bolshoy ave V.O., St.-Petersburg, 199178, Russia, (e-mail: cainenash@mail.ru).}
%\address[Second]{ITMO University, 49 Kronverkskiy ave, Saint Petersburg, 197101, Russia}
\address[Third]{CICESE Research Center, 3918, Carretera Ensenada-Tijuana, Ensenada, B.C. Mexico, Mexico 22860, (e-mail: yorlov@cicese.mx)}

\begin{abstract}                % Abstract of not more than 250 words.
A novel adaptive identifier is developed for nonlinear time-delay systems composed of linear, Lipschitz and non-Lipschitz components.
To begin with, an identifier is designed for uncertain  systems with {\it a priori} known delay values, and then it is generalized for systems with unknown delay values.
The algorithm ensures the asymptotic parameter estimation and state observation by using gradient algorithms.
The unknown delays and plant parameters are  estimated by using a special equivalent extension of the plant equation.
The algorithms stability is presented by solvability of linear matrix inequalities.
Simulation results are invoked to support the developed identifier design and to  illustrate the efficiency of the proposed synthesis procedure.
\end{abstract}

\begin{keyword}
Adaptive identification, nonlinear system, delays, LMI.
\end{keyword}

\end{frontmatter}
%===============================================================================

\section{Introduction}

The investigation focuses on adaptive/on-line identification of unknown time-invariant plant parameters.
The existing literature suggests many design methods for plants with lumped model and known structure, see, e.g. \cite{Landau79,Goodwin84,Astrom89,Narendra89,Sastry89,
Ioannou95,Ljung99}.
These methods demonstrate acceptable robustness in the presence of small input and output disturbances or small perturbations of model parameters. Due to this, the methods have found practical applications in electrical vehicle application \cite{Flah14}, robotics \cite{Farza09}, chemical industry  \cite{Ekramian13}, etc.
However, there are only few results applicable to synthesis of plants with time-delays, see, e.g. \cite{Nakagiri95,Verduyn01,Orlov01,Belkoura02,Orlov02,Orlov03,
Orlov09}.
%Nevertheless, some of them have already shown good results in practice for  identification in a port-fuel-injected internal combustion engine \cite{Orlov09}...

In \cite{Nakagiri95,Verduyn01} the identification of time-delay systems demonstrated complexity of the problem, particularly, the identifiability of a delay system was shown to place a restrictive condition on the structure of the system. This condition was defined through the characteristic matrix of the functional differential equation of the plant whereas no indication was given on how to attain this condition using some accessible inputs.

In \cite{Orlov01,Belkoura02,Orlov02,Orlov03,Orlov09}, the adaptive identifiers were developed step by step, for systems with the complete state information and for single input single output (SISO) linear time delay systems, given in the canonical form of a differential equation of an arbitrary order.  Necessary and sufficient conditions for a linear delay system to be  identifiable have been given in terms of weak
controllability property and nonsmooth input signals. In \cite{Orlov09} the proposed results were experimentally confirmed in an application to a port-fuel-injected internal combustion engine.

However, the identification of single-input single-output (SISO) nonlinear systems with delays has not been addressed so far. Therefore, the main contribution of the paper consists in solving the following problems:

\begin{enumerate}
\item[(i)] design of the adaptive plant identifier for uncertain nonlinear SISO systems with {\it a priori} known time-delays;
\item[(ii)] generalization of the proposed adaptive identifier to the case with unknown time-delays;
\item[(iii)] derivation of the stability conditions in terms of feasibility of linear matrix inequalities (LMIs).
\end{enumerate}

The rest of the paper is outlined as follows. The problem statement is given in Section \ref{sec2}. In Sections \ref{sec3} and \ref{sec4},  two algorithms  are developed side by side for {\it a priori} known and unknown delays, accompanied with the convergence conditions of the proposed algorithms, given in terms of specific LMIs feasibility. In Section \ref{sec5}, the capability  of the proposed synthesis  is illustrated in a simulation study to additionally support the analytical results. Finally, Section \ref{sec6} collects some conclusions.

\textit{Notations}. Throughout the paper, the superscript $\rm T$ stands for the matrix transposition;
$\mathbb R^{n}$ denotes the $n$ dimensional Euclidean space with vector norm $|\cdot|$;
$\mathbb R^{n \times m}$ is the set of all $n \times m$ real matrices;
the notation $P>0$ for $P \in \mathbb R^{n \times n}$ means that $P$ is symmetric and positive definite;
$I$ is the identity matrix of an appropriate dimension;
$diag\{\cdot\}$ is used for a block diagonal matrix.

\section{Problem Formulation} \label{sec2}

Consider a plant model in the form

\begin{equation}
\label{eq2_1}
\begin{array}{l}
\dot{x}(t)=\sum_{i=0}^{k}\big[A_i x(t-\tau_i)+D_i \varphi(x(t-\tau_i))
\\
~~~~~~~~+G_i \psi(y(t-\tau_i))+B_i u(t-\tau_i)\big],
\\
y(t)=Cx(t),
\end{array}
\end{equation}
where $t \geq 0$, $x(t) \in \mathbb R^n$ is the state vector,
$u(t) \in \mathbb R$ is the control input which is assumed to be piece-wise continuous bounded function, %such that $\dot{u}(t)$ is a piece-wise continuous bounded function,
$y(t) \in \mathbb R$ is the output signal, available for the measurement. For certainty, the time-delay values  $\tau_i$ are ordered as follows $0=\tau_0<\tau_1<...<\tau_r$.

The function $\varphi(x) \in \mathbb R^l$ is globally Lipschitz continuous with an {\it a priori} known Lipschitz constant $L$. The nonlinear
function  $\psi(y(t)) \in \mathbb R^m$ is a piece-wise continuous. % and with a piece-wise continuous derivative $\frac{\partial \varphi(x)}{\partial x}$, the nonlinear function $\psi(y(t))  \in \mathbb R^m$ possesses a piece-wise continuous derivative  $\frac{\partial \psi(y)}{\partial y}$.
The well-posedness of system \eqref{eq2_1} is thus ensured in the open-loop.
Along with the above functions, the matrix $C \in \mathbb R^{1 \times n}$ is also known {\it a priori} whereas the matrices $A_i \in \mathbb R^{n \times n}$, $D_i \in \mathbb R^{n \times l}$, $G_i \in \mathbb R^{n \times m}$ and $B_i \in \mathbb R^{n}$ are unknown. Due to the duality of control synthesis and observer design, the measured output is pre-determined with no measurement delays to ensure the identifiability of uncertain matrix parameters (see  Assumption \ref{As03}). Since some matrices might be zero, without loss of generality  system \eqref{eq2_1} has been assumed to possess the same state and input delays.

The delay-free  model \eqref{eq2_1}, formally coming  with  $\tau_1=...=\tau_r=0$, is considered for feedback control and for observation of $x$ in \cite{Farza09,Ekramian13}. In these papers it is noted that such free delay model can describe a number of technical systems and technological processes. For instance,  the estimation of the state and kinetic parameters is addressed in \cite{Farza09} for a bioreactor whereas in \cite{Farza09}, the estimation is investigated for a single-link manipulator with revolute joints actuator. In \cite{Kumar19} the model of chemical and biochemical reactors have input and state delays which arise due to delays in the reception and transmission of data and technological cycles. While controlling electrical equipment, delays are caused by the remote control via digital communication channels. However, for model with delays \eqref{eq2_1} the identification problem has not been addressed so far.

The following technical assumptions are made throughout.
\begin{assumption}
\label{As0} System \eqref{eq2_1} is a BIBO (bounded input - bounded output) system in the sense that
while being driven by a bounded input, the system generates a bounded solution regardless of wherever it is initialized.
\end{assumption}

\begin{assumption}
\label{As1} The input signal $u(t)$ is uniformly bounded and periodic, and  persistently excites system \eqref{eq2_1} in the sense that
there exist constants $C>0$ and $\alpha>0$ such that $\int_{t}^{t+C}\Phi(s)\Phi(s)^{\rm T}ds \geq \alpha I$ with  $\Phi(t)=col\{x(t-\tau_0),...,x(t-\tau_k),\varphi(t-\tau_0),...,\varphi(t-\tau_k),\psi(t-\tau_0),...,\psi(t-\tau_k),u(t-\tau_0),...,u(t-\tau_k)\}$ ,  computed along an arbitrary system trajectory $x(t)$.
\end{assumption}

\begin{assumption}
\label{As2}
The following matching conditions hold
\begin{equation*}
%\label{eq2_2}
\begin{array}{l}
A_i=A_i^0+T_0 \kappa_i^{A},~~~
D_i=D_i^0+T_0 \kappa_i^{D},~~~
\\
G_i=G_i^0+T_0 \kappa_{i}^{G},~~~
B_i=B_i^0+T_0 \kappa_{i}^{B},~~~i=0,...,k,
\end{array}
\end{equation*}
where $A_i^0$, $D_i^0$, $G_i^0$, $B_i^0$, and $T_{0} \in \mathbb R^{n}$ are known and $CT_0 \neq 0$, whereas $\kappa_i^{A} \in \mathbb R^{1 \times n}$, $\kappa_i^{D} \in \mathbb R^{1 \times l}$, $\kappa_{i}^{G} \in \mathbb R^{1 \times m}$, and $\kappa_{i}^{B} \in \mathbb R$ are unknown.
\end{assumption}

\begin{assumption}
\label{As03}
System \eqref{eq2_1} is identifiable in the sense that there exists a persistently exciting input $u(t)$ such that the unknown parameters in \eqref{eq2_1} are uniquely determined from the measured output $y(t)$ \cite{Orlov03}.
\end{assumption}

\begin{assumption}
\label{As04}
System \eqref{eq2_1} is locally observable in the sense that the difference $\Delta x(t)$ of arbitrary solutions $x(t),\hat{x}(t)$ of \eqref{eq2_1} asymptotically escapes $\lim_{t \rightarrow \infty} \Delta x(t)=0$ to zero provided that these solutions generate the same output  $Cx(t)=C(\hat{x}(t)$ for all $t\geq 0$.
\end{assumption}

The above assumptions are made for technical reasons. Assumption \ref{As0} is well-recognized from the linear theory  to be imposed on a system for its on-line identification in open-loop \cite{Orlov02}.

Assumption \ref{As1} is an extension of the well-known Persistency-of-Excitation (PE) condition (see definition of PE condition in \cite{Shimkin87,Mareels88,Ioannou96}) to the underlying time-delay system. Such an assumption is typically invoked to prove the identifier convergence to the nominal system parameters (cf. that of Theorem \ref{Th1} where the input periodicity is particularly utilized to apply the invariance principle).

Assumption \ref{As2} is inspired from a finite-dimensional matching condition counterpart used to ensure the identifiability of the unknown parameters. A similar identifiability problem is repeatedly discussed in the adaptive control \cite{Tao03,Hovakimyan10} and adaptive identification of free-delay linear plants in \cite{Tao03}.

Assumptions \ref{As03} and \ref{As04}, coupled together, ensure that relation
\begin{equation}
\label{eq3_11}
\begin{array}{l}
\lim_{t \rightarrow \infty}C T_0 \sum_{i=0}^{k}\Big[
 \Delta \kappa_i^{A} x(t-\tau_i)
+ \Delta \kappa_i^{D} \varphi(x(t-\tau_i))
\\
+ \Delta \kappa_i^{G} \psi(y(t-\tau_i))

+ \Delta \kappa_i^{B} u(t-\tau_i)
\Big]=0
\end{array}
\end{equation}
can only be satisfied for the trivial parameter errors
\begin{equation}
\begin{array}{l}
 \Delta\kappa_i^{A}=0,\ \Delta\kappa_i^{D}=0,\ \Delta\kappa_i^{G}=0,\
\\
\Delta\kappa_i^{B}=0,\ i=0,1,\ldots,k
\label{AlgebraicDeltai}
\end{array}
\end{equation}
where  $\Delta \kappa_i^{A}=\kappa_i^{A}-\hat{\kappa}_i^{A}$,
$\Delta \kappa_i^{D}=\kappa_i^{D}-\hat{\kappa}_i^{D}$,
$\Delta \kappa_i^{G}=\kappa_i^{G}-\hat{\kappa}_i^{G}$,
$\Delta \kappa_i^{B}=\kappa_i^{B}-\hat{\kappa}_i^{B}$, $i=0,...,k,$ are the deviations of the nominal parameters $\kappa_i^{A}$, $\kappa_i^{D}$, $\kappa_i^{G}$,  $\kappa_i^{B}$ from their estimates
$\hat\kappa_i^{A}$, $\hat\kappa_i^{D}$, $\hat\kappa_i^{G}$, $\hat\kappa_i^{B}$.
To reproduce this conclusion it suffices to equate the outputs $Cx(t)=C\hat{x}(t)$ of system \eqref{eq2_1}, generated with
the nominal parameters $\kappa_i^{A}$, $\kappa_i^{D}$, $\kappa_i^{G}$,  $\kappa_i^{B}$ and their estimates
$\hat\kappa_i^{A}$, $\hat\kappa_i^{D}$, $\hat\kappa_i^{G}$, $\hat\kappa_i^{B}$, and after that
differentiate the resulting equality along the corresponding solutions of  \eqref{eq2_1}, taking into account the local observability of the system.

If confined to SISO time-delay systems, Assumption \ref{As03} is well-known  \cite{Orlov09} to hold true.  The identifiability of the system parameters and delays can then be enforced by applying to the system a sufficiently nonsmooth signal that persistently excites the system.  These signals are constructively introduced by imposing the state of the system and the system input to have different
smoothness properties \cite{Orlov03}.  In general,
 Assumption \ref{As03}, roughly speaking,   requires that not only  the solutions $x(t-\tau_i)$ and the inputs $u(t-\tau_i)$, but in addition to  \cite{Orlov03}, also  $\varphi(t-\tau_i)$ and $\psi(t-\tau_i)$, viewed in combination with $x(t-\tau_i)$ and $u(t-\tau_i)$,  present different behaviour. For MIMO systems, this topic however calls for further investigation and remains beyond the scope of the paper.

In the sequel, Assumption \ref{As03} is simply postulated, and only numerical evidences are given in  Section \ref{sec5} to support it in a nontrivial academic example, illustrating the theory developed.

For later use, let us introduce the estimation errors
\begin{equation*}
\begin{array}{l}
\Delta \kappa_i^{A}(t)=\kappa_i^{A}-\hat{\kappa}_i^{A}(t),~~~
\Delta \kappa_i^{D}(t)=\kappa_i^{D}-\hat{\kappa}_i^{D}(t),~~~
\\
\Delta \kappa_i^{G}(t)=\kappa_i^{G}-\hat{\kappa}_i^{G}(t),~
\Delta \kappa_i^{B}(t)=\kappa_i^{B}-\hat{\kappa}_i^{B}(t),~ i=0,...,k,
\\
\varepsilon(t)=x(t)-\hat{x}(t),
\end{array}
\end{equation*}
where $\hat\kappa_i^{A}(t)$, $\hat\kappa_i^{D}(t)$, $\hat\kappa_i^{G}(t)$, $\hat\kappa_i^{B}(t)$, and $\hat x(t)$ are dynamic estimates of the nominal  values $\kappa_i^{A}$, $\kappa_i^{D}$, $\kappa_i^{G}$, $\kappa_i^{B}$, and $x(t)$ accordingly.

The objective is to design an identification algorithm that ensures

\begin{equation}
\label{eq2_3}
\begin{array}{l}
\lim\limits_{t \to \infty}\Delta\kappa_i^{A}(t)=0,~~
\lim\limits_{t \to \infty}\Delta\kappa_i^{D}(t)=0,~~
\\
\lim\limits_{t \to \infty}\Delta\kappa_i^{G}(t)=0,~~
\lim\limits_{t \to \infty}\Delta\kappa_i^{B}(t)=0,~~i=0,...,k,
\\
\lim\limits_{t \to \infty}\varepsilon(t)=0.
\end{array}
\end{equation}
In what follows, such an identification algorithm is developed for the nonlinear time-delay system in question.

\section{Adaptive identifier design under {\it a priori} known delay values} \label{sec3}

Consider a plant model
\begin{equation}
\label{eq3_1}
\begin{array}{l}
\dot{\hat{x}}(t)=\sum_{i=0}^{k}\Big[
A_i^0 \hat{x}(t-\tau_i)+D_i^0 \varphi(\hat{x}(t-\tau_i))
\\
~~~~~~~~+G_i^0 \psi(y(t-\tau_i))+B_i^0 u(t-\tau_i)-Y_i \varepsilon(t-\tau_i)
\\
~~~~~~~~+T_0\sum_{i=0}^{k}\Big[\hat{\kappa}_i^{A}(t) \hat{x}(t-\tau_i)+ \hat{\kappa}_i^{D}(t) \varphi(\hat{x}(t-\tau_i))
\\
~~~~~~~~+ \hat{\kappa}_i^{G}(t) \psi(y(t-\tau_i))
+ \hat{\kappa}_i^{B}(t) u(t-\tau_i)\big)\Big],
\\
\hat{y}(t)=C\hat{x}(t),
\end{array}
\end{equation}
of the same structure as that of \eqref{eq2_1} with Hurwitz matrices $Y_i \in \mathbb R^{n \times n}$ at the  designer disposition. Let the model parameters be updated as
$\dot{\hat \kappa}_i^{A}(t)^{\rm T}=-\Gamma_i^{A} e(t)\hat{x}(t-\tau_i)$,
$\dot{\hat \kappa}_i^{D}(t)^{\rm T}=-\Gamma_i^{D} e(t)\varphi(\hat{x}(t-\tau_i))$, $\dot{\hat \kappa}_i^{G}(t)-\Gamma_i^{G} e(t)\psi(y(t-\tau_i))$,
$\dot{\hat \kappa}_i^{B}(t)=-\Gamma_i^{B} e(t)u(t-\tau_i)$, $i=0,1,\ldots,k$,
%with positive definite matrices $\Gamma_i^{A}$, $\Gamma_i^{D}$, $\Gamma_i^{G}$, and $\Gamma_i^{G}$ as well as the matrices $F_i^A$, $F_i^D$, $F_i^G$, and $F_i^B$ of appropriate dimensions,
so that the parameter errors are governed by
\begin{equation}
\label{eq_Th02}
\begin{array}{l}
\Delta\dot{\kappa}_i^{A}(t)^{\rm T}=-\Gamma_i^{A} e(t)\hat{x}(t-\tau_i),~~~~~
\\
\Delta\dot{\kappa}_i^{D}(t)^{\rm T}=-\Gamma_i^{D} e(t)\varphi(\hat{x}(t-\tau_i)),
\\
\Delta\dot{\kappa}_i^{G}(t)^{\rm T}=-\Gamma_i^{G} e(t)\psi(y(t-\tau_i)),~~~~~
\\
\Delta\dot{\kappa}_i^{B}(t)=-\Gamma_i^{B} e(t)u(t-\tau_i).
\end{array}
\end{equation}
The matrices $\Gamma_i^{A}$, $\Gamma_i^{D}$,  $\Gamma_i^{G}$, and $\Gamma_i^{B}>0$ are positive definite and of appropriate dimensions. Then the plant deviation  $\varepsilon(t)$ from the model variable is computed according to \eqref{eq2_1} and \eqref{eq3_1}, and it is therefore  governed by
\begin{equation}
\label{eq3_2}
\begin{array}{l}
\dot{\varepsilon}(t)=\sum_{i=0}^{k}\Big[A_i \varepsilon(t-\tau_i)
+D_i [\varphi(x(t-\tau_i)
\\
~~~-\varphi(\hat{x}(t-\tau_i)]-Y_i \varepsilon(t-\tau_i)
\\
~~~+T_0 \sum_{i=0}^{k}\Big[
\Delta \kappa_i^{A}(t) \hat{x}(t-\tau_i)
+\Delta \kappa_i^{D}(t) \varphi(\hat{x}(t-\tau_i))
\\
~~~+ \Delta \kappa_i^{G}(t) \psi(y(t-\tau_i))
+ \Delta \kappa_i^{B}(t) u(t-\tau_i)
\Big],
\\
e(t)=C\varepsilon(t).
\end{array}
\end{equation}

The result, stated below, relies on the notation
\begin{equation}
\label{Matrix_Notation}
\begin{array}{l}
\bar{\Psi}_{11}=A_0^{\rm T}P+PA_0-Y_0+\sum_{i=0}^{k}S_i,

\\

\Psi_{11}=
\begin{bmatrix}
\bar{\Psi}_{11} & P(A_1-Y_1) &...& P(A_k-Y_k)
\\
* & -S_1-Y_1 &...& 0
\\
\vdots & \vdots & \ddots & \vdots
\\
* & * & ... & -S_k-Y_k
\end{bmatrix},

\\

\Psi_{12}=
\begin{bmatrix}
PD_0 & PD_1 &...& PD_k
\\
* & 0 &...& 0
\\
\vdots & \vdots & \ddots & \vdots
\\
* & * & ... & 0
\end{bmatrix},

\\

\Psi=
\begin{bmatrix}
\bar{\Psi}_{11}+L^2I & \Psi_{12}
\\
* & -I
\end{bmatrix}.
\end{array}
\end{equation}
Here the notation $"*"$ means a symmetric block of a symmetric matrix.

\begin{theorem}
\label{Th1}
Let the delay values $\tau_j$, $j=1,...,k$ be known \textit{a priori}, and let  Assumptions \ref{As0}--\ref{As04} hold. Moreover, let  there exist matrices $P=P^{\rm T}>0$, $S_i>0$, $i=0,...,k$ such that the relations
\begin{equation}
\label{eq_Th01}
\begin{array}{l}
\Psi < 0
~~~\mbox{and}~~~
PT_0=C^{\rm T}
\end{array}
\end{equation}
hold true.
Then the over-all error system \eqref{eq_Th02}, \eqref{eq3_2} is asymptotically stable so that the above objective \eqref{eq2_3} is achieved with identifier \eqref{eq3_1}, updated according to \eqref{eq3_2}.

\end{theorem}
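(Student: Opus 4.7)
The plan is to use the Lyapunov--Krasovskii method combined with the invariance principle for time-delay systems.

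First, I would construct a composite functional
\begin{equation*}
V(t)=\varepsilon(t)^{\rm T}P\varepsilon(t)+\sum_{i=1}^{k}\int_{t-\tau_i}^{t}\varepsilon(s)^{\rm T}S_i\varepsilon(s)\,ds+W(t),
\end{equation*}
where $W(t)$ collects the quadratic parameter-error terms weighted by the inverse gains, namely $W(t)=\sum_{i=0}^{k}\bigl[\Delta\kappa_i^{A}(t)(\Gamma_i^{A})^{-1}\Delta\kappa_i^{A}(t)^{\rm T}+\Delta\kappa_i^{D}(t)(\Gamma_i^{D})^{-1}\Delta\kappa_i^{D}(t)^{\rm T}+\Delta\kappa_i^{G}(t)(\Gamma_i^{G})^{-1}\Delta\kappa_i^{G}(t)^{\rm T}+\Delta\kappa_i^{B}(t)(\Gamma_i^{B})^{-1}\Delta\kappa_i^{B}(t)\bigr]$. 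The parameter piece is chosen so that its time derivative along \eqref{eq_Th02} produces exactly the cross terms arising from differentiating $\varepsilon^{\rm T}P\varepsilon$ against the $T_0$-summation in \eqref{eq3_2}.

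Next I would compute $\dot V$ along \eqref{eq3_2}. The key algebraic step is to invoke the matching condition $PT_0=C^{\rm T}$, which converts each cross term $2\varepsilon^{\rm T}PT_0\,\Delta\kappa_i^{*}(t)\,\zeta_i(t-\tau_i)$ into $2 e(t)\,\Delta\kappa_i^{*}(t)\,\zeta_i(t-\tau_i)$; this precisely cancels the contribution of $\dot W$ dictated by the gradient laws \eqref{eq_Th02}. What remains in $\dot V$ involves only the delayed states $\varepsilon(t-\tau_i)$ and the Lipschitz increments $\Delta\varphi_i:=\varphi(x(t-\tau_i))-\varphi(\hat x(t-\tau_i))$. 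Using the global Lipschitz bound $|\Delta\varphi_i|\le L|\varepsilon(t-\tau_i)|$ absorbed via the inequality $\Delta\varphi_i^{\rm T}\Delta\varphi_i\le L^{2}\varepsilon(t-\tau_i)^{\rm T}\varepsilon(t-\tau_i)$, I would cast the remainder as a quadratic form in the augmented vector $\eta=\mathrm{col}\{\varepsilon(t),\varepsilon(t-\tau_1),\ldots,\varepsilon(t-\tau_k),\Delta\varphi_0,\ldots,\Delta\varphi_k\}$, whose kernel matches, after a Schur complement with respect to the $\Delta\varphi$-block, the matrix $\Psi$ in \eqref{Matrix_Notation}. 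The LMI $\Psi<0$ therefore yields $\dot V\le -\mu|\varepsilon(t)|^{2}$ for some $\mu>0$.

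From $\dot V\le 0$ I obtain uniform boundedness of $\varepsilon$ and of all parameter errors, and integrability of $|\varepsilon|^{2}$. Boundedness of $\dot\varepsilon$ follows from Assumption \ref{As0} and the boundedness of the regressors, so by Barbalat's lemma $\varepsilon(t)\to 0$ and hence $e(t)\to 0$. To upgrade this to convergence of the parameter errors, I would apply the invariance principle of Hale--Lunel suited to periodic functional differential equations; this is where the periodicity of $u$ in Assumption \ref{As1} is used. On the largest invariant set where $\varepsilon\equiv 0$, equation \eqref{eq3_2} collapses to the algebraic relation \eqref{eq3_11}, and Assumptions \ref{As03}--\ref{As04} together with the persistent-excitation condition force \eqref{AlgebraicDeltai}, yielding \eqref{eq2_3}.

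The main obstacle I expect is the Schur-complement bookkeeping that brings the mixed Lipschitz/delayed-state quadratic form into the block pattern of $\Psi$ while keeping the role of the observer gains $Y_i$ transparent; once that manipulation is in place, the passage from Lyapunov non-increase to full parametric convergence is routine given the invariance principle and the identifiability apparatus already set up in the problem formulation.
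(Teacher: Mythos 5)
Your proposal follows essentially the same route as the paper: the identical Lyapunov--Krasovskii functional (quadratic state term plus inverse-gain-weighted parameter errors plus the $\int_{t-\tau_i}^{t}\varepsilon^{\rm T}S_i\varepsilon$ terms), the same cancellation of cross terms via $PT_0=C^{\rm T}$, the same absorption of the Lipschitz increments into the quadratic form defining $\Psi$ (the paper phrases this as an S-procedure rather than a Schur complement, but the mechanism is identical), and the same concluding step via the invariance principle for periodic delay systems combined with Assumptions \ref{As03}--\ref{As04} to collapse the invariant set to the origin. The only cosmetic difference is your intermediate Barbalat step to get $\varepsilon\to 0$ before invoking invariance, which the paper obtains directly from the invariance principle; this does not change the substance of the argument.
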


\begin{proof}The proof is constructed in two steps.

\subsection{Stability analysis} \label{SA1}

Consider Lyapunov-Krasovskii functional

\begin{equation}
\label{eq3_3}
\begin{array}{c}
V=V_1+V_2,
\end{array}
\end{equation}
where

\begin{equation}
\label{eq3_4}
\begin{array}{c}
V_1=\varepsilon^{\rm T}(t)P\varepsilon(t)
+\sum_{i=0}^{k}\Big[\Delta \kappa_i^{A}(t)(\Gamma_i^{A})^{-1} \kappa_i^{A}(t)^{\rm T}
\\
~~~~~+\Delta \kappa_i^{D}(t)(\Gamma_i^{D})^{-1} \Delta \kappa_i^{D}(t)^{\rm T}
\\
~~~~~+ \Delta \kappa_i^{G}(t)(\Gamma_i^{G})^{-1}\Delta \kappa_i^{G}(t)^{\rm T}
\\
~~~~~+(\Gamma_i^{B})^{-1}[\Delta \kappa_i^{B}(t)]^2
\Big],
\end{array}
\end{equation}

%\begin{equation}
%\label{eq3_5}
%\begin{array}{c}
%V_2=\int_{t-\bar{\tau}}^{0}\varepsilon^{\rm T}(s)S\varepsilon(s)ds+
%\bar{\tau}\int_{-\bar{\tau}}^{0}\int_{t+\zeta}^{t}\dot{\varepsilon}^{\rm T}(\sigma)R\dot{\varepsilon}(\sigma)d\sigma d\zeta.
%\end{array}
%\end{equation}

\begin{equation}
\label{eq3_5}
\begin{array}{c}
V_2=\sum_{i=0}^{k}\int_{t-\tau_i}^{t}\varepsilon^{\rm T}(s)S_i\varepsilon(s)ds.
\end{array}
\end{equation}

The computation of the time-derivative of $V_1$ along the trajectories of %\eqref{eq3_1} and
\eqref{eq_Th02} and \eqref{eq3_2} yields

%\begin{equation}
%\label{eq3_6}
%\begin{array}{l}
%\dot{V}_1
%=\varepsilon^{\rm T}(A_0^{\rm T}P+PA_0-Y_0)\varepsilon^{\rm T}
%\\
%~~~~~~+2\varepsilon^{\rm T}PD_0 [\varphi(x(t)-\varphi(\hat{x}(t)]
%\\
%~~~~~~+2\varepsilon^{\rm T}P\sum_{i=1}^{k}\Big[A_i \varepsilon(t-\tau_i)+D_i [\varphi(x(t-\tau_i)
%\\
%~~~~~~-\varphi(\hat{x}(t-\tau_i)]-Y_i \varepsilon(t-\tau_i)\Big]
%\\
%~~~~~~+2\varepsilon^{\rm T}P T_0 \sum_{i=0}^{k} \Big[
% \Delta \kappa_i^{A}(t) \hat{x}(t-\tau_i)
%\\
%~~~~~~+ \Delta \kappa_i^{D}(t) \varphi(\hat{x}(t-\tau_i))
%+ \Delta \kappa_i^{G}(t) \psi(y(t-\tau_i))
%\\
%~~~~~~+ \Delta \kappa_i^{B}(t) \hat{u}(t-\tau_i)
%\Big]
%
%
%\\
%~~~~~~-2\sum_{i=0}^{k}[\Delta \kappa_i^{A}(t) e(t) \hat{x}(t-\tau_i)
%\\
%~~~~~~+\Delta \kappa_i^{D}(t) e(t) \varphi(\hat{x}(t-\tau_i))
%\\
%~~~~~~+\Delta \kappa_i^{G}(t) e(t) \psi(y(t-\tau_i))
%\\
%~~~~~~+\Delta \kappa_i^{B}(t) e(t) u(t-\tau_i)
%].
%\end{array}
%\end{equation}
%
%Using \eqref{eq_Th01} and the well-known relation $tr(ab^{\rm T})=b^{\rm T}a$ for any co-dimensional vectors $a$ and $b$  brings \eqref{eq3_6} into the form

\begin{equation}
\label{eq3_7}
\begin{array}{l}
\dot{V}_1
=\varepsilon^{\rm T}(A_0^{\rm T}P+PA_0-Y_0)\varepsilon^{\rm T}
\\
~~~~~~+2\varepsilon^{\rm T}PD_0 [\varphi(x(t)-\varphi(\hat{x}(t))]
\\
~~~~~~+2\varepsilon^{\rm T}P\sum_{i=1}^{k}\Big[A_i \varepsilon(t-\tau_i)
\\
~~~~~~+D_i [\varphi(x(t-\tau_i)-\varphi(\hat{x}(t-\tau_i)]
\\
~~~~~~-Y_i \varepsilon(t-\tau_i)\Big].
\end{array}
\end{equation}

In turn, computing the time-derivative of $V_2$, yields
\begin{equation}
\label{eq3_8}
\begin{array}{c}
\dot{V}_2
=\sum_{i=0}^{k}[\varepsilon(t)^{\rm T}S_i\varepsilon(t)-\varepsilon(t-\tau_i)^{\rm T}S_i\varepsilon(t-\tau_i)].
\end{array}
\end{equation}
Introducing the vectors $\chi_1(t)=col\{\varepsilon(t),\varepsilon(t-\tau_1),...,\varepsilon(t-\tau_k)\}$, $\chi_2(t)=col\{\varphi(x(t))-\varphi(\hat{x}(t)),\varphi(x(t-\tau_1))-\varphi(\hat{x}(t-\tau_1)),...,\varphi(x(t-\tau_k))-\varphi(\hat{x}(t-\tau_k))\}$ and combining \eqref{Matrix_Notation}, \eqref{eq3_7} and \eqref{eq3_8}, let us represent $\dot{V}$ in the form

\begin{equation}
\label{eq3_9}
\begin{array}{c}
\dot{V}
=[\chi_1^{\rm T}~ \chi_2^{\rm T}]
\begin{bmatrix}
\Psi_{11} & \Psi_{12}
\\
* & 0
\end{bmatrix}
\begin{bmatrix}
\chi_1
\\
\chi_2
\end{bmatrix}
\end{array}
\end{equation}
where $\Psi_{11}, \Psi_{12}$ are governed by \eqref{Matrix_Notation}.

Since the right-hand side of \eqref{eq3_9}  does not depend of the estimation  errors  it  cannot be negative definite, however it might be negative semi-definite.
In order to conclude that $\dot{V} \leq 0$ it suffices to establish that the matrix $\begin{bmatrix}
\Psi_{11} & \Psi_{12}
\\
* & 0
\end{bmatrix}$ is negative definite. For reproducing this, let us deduce  the inequality
\begin{equation}
\label{eq3_10}
\begin{array}{c}
[\chi_1^{\rm T}~ \chi_2^{\rm T}]
\mbox{diag} \{ L^2I,-I\}
[\chi_1^{\rm T}~ \chi_2^{\rm T}]^{\rm T}
\geq 0.
\end{array}
\end{equation}
 from the global Lipschitz condition
$[\varphi(x(t-\tau_i))-\varphi(\hat{x}(t-\tau_i))]^{\rm T} [\varphi(x(t-\tau_i))-\varphi(\hat{x}(t-\tau_i))] \leq L^2 \varepsilon(t-\tau_i)^{\rm T}\varepsilon(t-\tau_i)$, $i=0,...,k$,
imposed in Section \ref{sec2} on the  function  $\varphi(x)$.
Using S-procedure from \cite{Yakubovich73,Boyd04} and taking  \eqref{eq3_10} into account, let us represent \eqref{eq3_9} as $\dot{V} \leq [\chi_1^{\rm T}~ \chi_2^{\rm T}] \Psi [\chi_1^{\rm T}~ \chi_2^{\rm T}]^{\rm T}$ where $\Psi$ is given by \eqref{Matrix_Notation}.
%\begin{equation}
%\label{eq3_10aa}
%\begin{array}{l}
%\dot{V}
%\leq \chi(t)^{\rm T}
%\Psi
%\chi(t).
%\end{array}
%\end{equation}
It is straightforward now to conclude that the inequality $\dot{V} \leq 0$ holds provided that  $\Psi < 0$ which is actually guaranteed by \eqref{eq_Th01}.

It follows that the signals $\varepsilon(t)$, $\Delta\kappa_i^{A}(t)$, $\Delta\kappa_i^{D}(t)$, $\Delta\kappa_i^{G}(t)$, and $\Delta\kappa_i^{B}(t)$ are uniformly bounded, and the over-all error system \eqref{eq_Th02}, \eqref{eq3_2} is stable.
Moreover, taking into account Assumption \ref{As0}, the boundedness of $\hat{x}(t)$ follows from that of  $x(t)$ and  $\varepsilon(t)$ as well as  the boundedness of $\dot\varepsilon(t)$  is straightforwardly concluded from \eqref{eq3_2} due to the boundedness of $\varepsilon(t)$ and $x(t)$.

\subsection{Asymptotic stability analysis} \label{BP1}

The asymptotic stability of the error system \eqref{eq_Th02}, \eqref{eq3_2} is established based on the infinite-dimensional extension \cite{Hen81} of the Krasovskii--LaSalle invariance principle  to time-periodic delay systems, similar to that of \cite{Rouche}. According to the invariance principle, thus extended, there must be a convergence of the trajectories of the error system \eqref{eq_Th02}, \eqref{eq3_2} to the largest invariant subset of the set of the solutions of \eqref{eq_Th02}, \eqref{eq3_2} for which  $\dot V =0$, or equivalently
\begin{equation}
\chi_1\equiv 0,\ \chi_2\equiv 0.
\label{lis}
\end{equation}

Let us show that manifold \eqref{lis} does not contain nontrivial trajectories of \eqref{eq_Th02}, \eqref{eq3_2}. Indeed, if confined to \eqref{lis}, one has
\begin{equation}
\varepsilon\equiv 0\ \Rightarrow  \ \dot\varepsilon\equiv 0,
\label{RA}
\end{equation}
and by virtue of  \eqref{eq_Th02}, one derives that
\begin{equation}
\begin{array}{l}
\Delta\dot\kappa_i^{A}(t)=0,\ \Delta\dot\kappa_i^{D}(t)=0,\ \Delta\dot\kappa_i^{G}(t)=0,\
\\
\Delta\dot\kappa_i^{B}(t)=0,\ i=0,1,\ldots,k.
\label{Deltai}
\end{array}
\end{equation}
Then along the invariant subset \eqref{lis}, relation $\sum_{i=0}^{k}\Big[A_i \varepsilon(t-\tau_i)
+D_i [\varphi(x(t-\tau_i)-\varphi(\hat{x}(t-\tau_i)]-Y_i \varepsilon(t-\tau_i)
-T_0 \Delta \kappa_i^{A}(t) \varepsilon(t-\tau_i)
-T_0 \Delta \kappa_i^{D}(t) [\varphi(x(t-\tau_i))-\varphi(\hat{x}(t-\tau_i))]
\Big] = 0$ is straightforwardly verified. With this in mind and taking relations \eqref{RA}, \eqref{Deltai} into account, the error dynamics \eqref{eq3_2} result in \eqref{eq3_11}, thereby ensuring that \eqref{AlgebraicDeltai} holds true. Thus, the largest invariant subset of the set   $\dot V =0$ coincides with the origin, and by applying the invariance principle, the  error system \eqref{eq_Th02}, \eqref{eq3_2} is established to be asymptotically stable.
This completes the proof of Theorem \ref{Th1}.

%%\small
%\begin{equation}
%\label{eq3_11}
%\begin{array}{l}
%T_0 \sum_{i=0}^{k}\Big[
% \Delta \kappa_i^{A} x(t-\tau_i)
%+ \Delta \kappa_i^{D} \varphi(x(t-\tau_i))
%\\
%+ \Delta \kappa_i^{G} \psi(y(t-\tau_i))
%
%+ \Delta \kappa_i^{B} u(t-\tau_i)
%\Big]=0.
%\end{array}
%\end{equation}
%\normalsize
%
%{\bf STOP}
%
%\textbf{Denoting
%$\Theta^{\rm T}=col\big\{\Delta \kappa_0^A,...,\Delta \kappa_k^A,
%\Delta \kappa_0^D,...,\Delta \kappa_k^D,
%\Delta \kappa_0^G,..., \\ \Delta \kappa_k^G,
%\Delta \kappa_0^B,...,\Delta \kappa_k^B\big\}$ and employing Assumption \ref{As1}, one gets
%$\int_{t}^{t+C}\Theta^{\rm T} \Psi(s)\Psi(s)^{\rm T}\Theta ds \geq \alpha \Theta^{\rm T}\Theta$.
%Since $\Theta^{\rm T} \Phi(t)=0$ from \eqref{eq3_11}, one has $\alpha \Theta^{\rm T}\Theta \leq 0$.
%As a result, $\Theta=0$.}

\end{proof}

\section{Case of unknown time-delays} \label{sec4}

In the present section, the number $k$ of time-delays $\tau_i$, $i=1,...,k$ of the plant dynamics \eqref{eq2_1} are no longer  assumed to be known {\it a priori}. The  identifier design in such a frame calls for another interpretation of equation \eqref{eq2_1}. To formally apply the developed identifier let us introduce the following notations

\begin{equation}
\label{eq4_00}
\begin{array}{l}
\bar{k} \geq k,~~~
0=\hat{\tau}_0<\hat{\tau}_1<...<\hat{\tau}_{\bar{k}},~~~
\\
\bar{A}_i \in \mathbb R^{n \times n}, \bar{D}_i \in \mathbb R^{n \times l},
\\
\bar{G}_i  \in \mathbb R^{n \times m}, \bar{B}_i  \in \mathbb R^{n},~~ i=1,...,\bar{k},
\\
\Xi=\{\tau_1,...,\tau_k\},
\\
\bar{\Xi}=\{\hat{\tau}_1,...,\hat{\tau}_{\bar{k}}\},
\\
\Lambda=\{A_i,D_i,G_i,B_i,~ i=1,...,k\},
\\
\bar{\Lambda}=\{\bar{A}_i, \bar{D}_i, \bar{G}_i, \bar{B}_i,~ i=1,...,\bar{k}\},
\end{array}
\end{equation}
and impose the following assumptions.

\begin{assumption}
\label{As3}
The values of $\bar{k}$ and $\hat{\tau}_i$, $i=1,...,\bar{k}$ are known {\it a priori} whereas the matrices $\bar{A}_i$, $\bar{D}_i$, $\bar{G}_i$, $\bar{B}_i$, $i=1,...,\bar{k}$ are unknown.
%if $\hat{\tau}_i \neq \tau_j$, $\bar{A}_i \neq A_j$, $\bar{D}_i \neq D_j$, $\bar{G}_i \neq G_j$, and $\bar{B}_i \neq B_j$ for $i=1,...,\bar{k}$ and $j=1,...,k$, then $\hat{\tau}_i =0$, $\bar{A}_i =0$, $\bar{D}_i =0$, $\bar{G}_i =0$, and $\bar{B}_i =0$.
\end{assumption}

\begin{assumption}
\label{As4}
The implications $\Xi \subseteq \bar{\Xi}$ and $\Lambda \subseteq \bar{\Lambda}$ are in force and the sets $\bar{\Xi} \setminus \Xi$ and $\bar{\Lambda} \setminus \Lambda$ contain zero elements.
\end{assumption}

The above assumptions presume that unknown plant delays belong to an {\it a priori} known finite set as it happens, e.g., in computer networks where transmission delays are commensurate a specific precision. Thus, the identification of unknown delay values is reduced to identifying fictitious delay values, which are associated with zero matrix multipliers to be identified along with other nonzero parameter values. Indeed, using notations \eqref{eq4_00} and Assumptions \ref{As3}, \ref{As4}, rewrite plant equation \eqref{eq2_1} in the form

\begin{equation}
\label{eq4_0}
\begin{array}{l}
\dot{x}(t)=\sum_{i=0}^{\bar{k}}\big[\bar{A}_i x(t-\hat{\tau}_i)+\bar{D}_i \varphi(x(t-\hat{\tau}_i))
\\
~~~~~~~~+\bar{G}_i \psi(y(t-\hat{\tau}_i))+\bar{B}_i u(t-\hat{\tau}_i)\big],
\\
y(t)=Cx(t).
\end{array}
\end{equation}

It is worth noticing that  model \eqref{eq4_0} has been obtained based on the modifications of  Assumptions \ref{As1} and \ref{As2},  given below.

\begin{assumption}
\label{As5}
The input signal $u(t)$ is uniformly bounded and periodic, and  persistently excites system \eqref{eq4_0} in the sense that
there exist constants $C>0$ and $\alpha>0$ such that $\int_{t}^{t+C}\Phi(s)\Phi(s)^{\rm T}ds \geq \alpha I$ with  $\Phi(t)=col\{x(t-\bar{\tau}_0),...,x(t-\bar{\tau}_{\bar{k}}),\varphi(t-\bar{\tau}_0),...,\varphi(t-\bar{\tau}_{\bar{k}}),\psi(t-\bar{\tau}_0),...,\psi(t-\bar{\tau}_{\bar{k}}),u(t-\bar{\tau}_0),...,u(t-\bar{\tau}_{\bar{k}})\}$,  computed along an arbitrary system trajectory $x(t)$.
\end{assumption}

\begin{assumption}
\label{As6}
The following matching conditions hold
$\bar{A}_i=\bar{A}_i^0+T_{0} \kappa_i^{\bar{A}}$,
$\bar{D}_i=\bar{D}_i^0+T_{0} \kappa_i^{\bar{D}}$,
$\bar{G}_i=\bar{G}_i^0+T_{0} \kappa_{i}^{\bar{G}}$,
$\bar{B}_i=\bar{B}_i^0+T_{0} \kappa_{i}^{\bar{B}}$, $i=0,...,\bar{k}$,
where $\bar{A}_i^0$, $\bar{D}_i^0$, $\bar{G}_i^0$, $\bar{B}_i^0$, $T_{0} \in \mathbb R^{n}$
 are known matrices and vectors, and $CT_0 \neq 0$, whereas
$\kappa_i^{\bar{A}} \in \mathbb R^{n \times 1}$,
$\kappa_i^{\bar{D}} \in \mathbb R^{1 \times l}$,
$\kappa_{i}^{\bar{G}} \in \mathbb R^{1 \times m}$, and
$\kappa_{i}^{\bar{B}} \in \mathbb R$ are unknown.
\end{assumption}

The basic idea behind  the representation of model \eqref{eq2_1} in form \eqref{eq4_0} is as follows. If $x(t-\hat{\tau}_l)=x(t-\tau_j)$ for some $l \in \{i,...,\bar{k}\}$ and $j \in \{i,...,k\}$, then $\bar{A}_l=A_j$. Otherwise, $x(t-\hat{\tau}_l) \neq x(t-\tau_j)$ for any $l \in \{i,...,\bar{k}\}$ and $j \in \{i,...,k\}$, and $\bar{A}_l=0$. Similar comments are also in order for  other terms in \eqref{eq4_0}. Thus, identifying nonzero matrices among of $\bar{A}_i, \bar{D}_i, \bar{G}_i$, $\bar{B}_i, i=1,...,\bar{k}$ yields  corresponding (non-fictitious) time-delays.

Let us now consider the identifier in the form

\begin{equation}
\label{eq4_2}
\begin{array}{l}
\dot{\hat{x}}(t)=\sum_{i=0}^{\bar{k}}\Big[
\bar{A}_i^0 \hat{x}(t-\tau_i)+\bar{D}_i^0 \varphi(\hat{x}(t-\tau_i))
\\
~~~~~~~~~+\bar{G}_i^0 \psi(y(t-\tau_i))+\bar{B}_i^0 u(t-\tau_i)\Big]
\\
~~~~~~~~~+T_0 \sum_{i=0}^{\bar{k}}\Big[ \hat{\kappa}_i^{A}(t) \hat{x}(t-\tau_i)
\\
~~~~~~~~~+\hat{\kappa}_i^{D}(t) \varphi(\hat{x}(t-\tau_i))
+ \hat{\kappa}_i^{\bar{A}}(t) \hat{x}(t-\hat{\tau}_i)
\\
~~~~~~~~~+\hat{\kappa}_i^{\bar{D}}(t) \varphi(\hat{x}(t-\hat{\tau}_i))
+\hat{\kappa}_i^{\bar{G}}(t) \psi(y(t-\hat{\tau}_i))
\\
~~~~~~~~~+ \hat{\kappa}_i^{\bar{B}}(t) u(t-\hat{\tau}_i)\Big] -Y_i \varepsilon(t-\hat{\tau}_i),
\\
\hat{y}(t)=C\hat{x}(t),
\end{array}
\end{equation}

Computing the time derivative of $\varepsilon(t)=x(t)-\hat{x}(t)$ along the trajectories \eqref{eq4_0} and \eqref{eq4_2}, one obtains

\begin{equation}
\label{eq4_3}
\begin{array}{l}
\dot{\varepsilon}(t)=\sum_{i=0}^{\bar{k}}\Big[\bar{A}_i \varepsilon(t-\hat{\tau}_i)-Y_i \varepsilon(t-\hat{\tau}_i)
\\
~~~~~~~~+\bar{D}_i [\varphi(x(t-\hat{\tau}_i)-\varphi(\hat{x}(t-\hat{\tau}_i)]
\Big]
\\
~~~~~~~~+T_0 \sum_{i=0}^{\bar{k}}\Big[
\Delta \kappa_i^{\bar{A}}(t) \hat{x}(t-\hat{\tau}_i)
\\
~~~~~~~~+\Delta \kappa_i^{\bar{D}}(t) \varphi(\hat{x}(t-\hat{\tau}_i))
\\
~~~~~~~~+\Delta \kappa_i^{\bar{G}}(t) \psi(y(t-\hat{\tau}_i))
+\Delta \kappa_i^{\bar{B}}(t) u(t-\hat{\tau}_i)
\Big],
\\
e(t)=C\varepsilon(t).
\end{array}
\end{equation}

According to model \eqref{eq4_3}, the corresponding  matrices in \eqref{Matrix_Notation} are represented as
\begin{equation*}
\label{Matrix_Notation}
\begin{array}{l}
\bar{\Psi}_{11}=\bar A_0^{\rm T}P+P \bar A_0-Y_0+\sum_{i=0}^{\bar k}S_i,

\\

\Psi_{11}=
\begin{bmatrix}
\bar{\Psi}_{11} & P(\bar A_1-Y_1) &...& P(\bar A_k-Y_{\bar k})
\\
* & -S_1-Y_1 &...& 0
\\
\vdots & \vdots & \ddots & \vdots
\\
* & * & ... & -S_{\bar k}-Y_{\bar k}
\end{bmatrix},

\\

\Psi_{12}=
\begin{bmatrix}
P \bar D_0 & P \bar D_1 &...& P \bar D_{\bar k}
\\
* & 0 &...& 0
\\
\vdots & \vdots & \ddots & \vdots
\\
* & * & ... & 0
\end{bmatrix}.
\end{array}
\end{equation*}
The structure of $\Psi$ is the same as in \eqref{Matrix_Notation}.

\begin{theorem}
\label{Th2}
Let Assumptions \ref{As0}, \ref{As03}--\ref{As6} hold and let there exist matrices $P=P^{\rm T}>0$, $S_i>0$, $i=1,...,\bar{k}$ such that

\begin{equation}
\label{eq_Th011}
\begin{array}{c}
\Psi < 0
~~~\mbox{and}~~~~
PT_0=C^{\rm T}.
\end{array}
\end{equation}
Then the identification algorithms

\begin{equation}
\label{eq_Th022}
\begin{array}{l}
\dot{\hat \kappa}_i^{\bar{A}}(t)^{\rm T}=\Gamma_i^{\bar{A}} \hat{x}(t-\hat{\tau}_i)e(t),
\\
\dot{\hat \kappa}_i^{\bar{D}}(t)^{\rm T}=\Gamma_i^{\bar{D}} \varphi(\hat{x}(t-\hat{\tau}_i))e(t),
\\
\dot{\hat \kappa}_i^{\bar{G}}(t)^{\rm T}=\Gamma_i^{\bar{G}} \psi(y(t-\hat{\tau}_i))e(t),
\\
\dot{\hat \kappa}_i^{\bar{B}}(t)^{\rm T}=\Gamma_i^{\bar{B}} u(t-\hat{\tau}_i)e(t)
\end{array}
\end{equation}
ensure objective \eqref{eq2_3}, where $\Gamma_i^{\bar{A}}$, $\Gamma_i^{\bar{D}}$, and $\Gamma_i^{\bar{G}}$ are positive definite matrices with appropriate dimensions and $\Gamma_i^{\bar{B}}>0$.

\end{theorem}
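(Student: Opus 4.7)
The plan is to parallel the proof of Theorem \ref{Th1} in the extended setting provided by \eqref{eq4_0}. Under Assumptions \ref{As3} and \ref{As4}, the plant \eqref{eq2_1} is equivalently rewritten as \eqref{eq4_0}, whose structural form coincides with \eqref{eq2_1} modulo replacing $k$, $\tau_i$, and the block matrices by $\bar{k}$, $\hat{\tau}_i$, and the barred matrices. The matching Assumption \ref{As6} preserves exactly the decomposition exploited in Theorem \ref{Th1}, so the identifier \eqref{eq4_2} and the update laws \eqref{eq_Th022} are the direct counterparts of \eqref{eq3_1} and \eqref{eq_Th02}. The error dynamics \eqref{eq4_3} have therefore the same structure as \eqref{eq3_2}, and the matrices in the LMI \eqref{eq_Th011} are the natural $\bar{k}$-indexed analogues of those in \eqref{Matrix_Notation}.

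The first step is to propose a Lyapunov--Krasovskii functional $V=V_1+V_2$ constructed as in \eqref{eq3_3}--\eqref{eq3_5}, but summed from $0$ to $\bar{k}$ and containing quadratic terms in all the extended parameter errors $\Delta\kappa_i^{\bar{A}}$, $\Delta\kappa_i^{\bar{D}}$, $\Delta\kappa_i^{\bar{G}}$, $\Delta\kappa_i^{\bar{B}}$ weighted by $(\Gamma_i^{\bar{A}})^{-1}$, $(\Gamma_i^{\bar{D}})^{-1}$, $(\Gamma_i^{\bar{G}})^{-1}$, $(\Gamma_i^{\bar{B}})^{-1}$. Differentiating along \eqref{eq_Th022}--\eqref{eq4_3} makes the parameter-error cross terms telescope thanks to the algebraic constraint $PT_0=C^{\mathrm T}$ in \eqref{eq_Th011}, exactly as in \eqref{eq3_7}. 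Combining with $\dot V_2$ and invoking the global Lipschitz bound on $\varphi$ via the S-procedure yields $\dot V\leq [\chi_1^{\mathrm T}~\chi_2^{\mathrm T}]\Psi[\chi_1^{\mathrm T}~\chi_2^{\mathrm T}]^{\mathrm T}$ with $\chi_1$, $\chi_2$ now collecting the $\bar{k}+1$ delayed components. The LMI $\Psi<0$ gives $\dot V\leq 0$ and, jointly with Assumption \ref{As0}, the uniform boundedness of $\varepsilon$, $\hat x$, $\dot\varepsilon$ and of every parameter error.

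Asymptotic convergence then follows from the same Krasovskii--LaSalle argument for time-periodic delay systems used at the end of the proof of Theorem \ref{Th1}. On the largest invariant subset inside $\{\dot V=0\}$, one has $\chi_1\equiv 0$, $\chi_2\equiv 0$, hence $\varepsilon\equiv 0$ and $\dot\varepsilon\equiv 0$; equations \eqref{eq_Th022} then force $\Delta\dot\kappa_i^{\bar{A}}=\Delta\dot\kappa_i^{\bar{D}}=\Delta\dot\kappa_i^{\bar{G}}=\Delta\dot\kappa_i^{\bar{B}}=0$, and \eqref{eq4_3} collapses to the extended version of \eqref{eq3_11} written over the set $\bar\Xi$. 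Assumptions \ref{As03} and \ref{As04}, stated in the extended framework via Assumption \ref{As5}, then force the trivial solution \eqref{AlgebraicDeltai} to hold for all indices $i=0,\dots,\bar{k}$, completing the asymptotic stability claim and delivering \eqref{eq2_3}.

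The main obstacle I anticipate is justifying the identifiability step in the enlarged representation. Because $\bar\Xi\setminus\Xi$ corresponds to fictitious delays associated with zero true parameters, one has to be sure that the persistence-of-excitation condition in Assumption \ref{As5} genuinely discriminates these components of the regressor $\Phi$ from the nonzero ones, so that the invariance analysis forces $\Delta\kappa_i^{\bar\cdot}(t)\to 0$ even at fictitious indices; otherwise the algorithm could converge to a spurious parameter set that reproduces $y(t)$ but assigns nonzero mass to a fictitious delay. Once Assumption \ref{As5} is postulated, this is an application of Assumption \ref{As03} to \eqref{eq4_0}, but it is the only place where the passage from $\Xi$ to $\bar\Xi$ is non-cosmetic; the rest of the argument is a direct transcription of the known-delay proof.
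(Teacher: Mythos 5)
Your proposal is correct and takes essentially the same approach as the paper: the paper's entire proof of Theorem~\ref{Th2} is the one-line observation that the error system \eqref{eq4_3}, \eqref{eq_Th022} has the same structure as \eqref{eq3_2}, \eqref{eq_Th02}, so Theorem~\ref{Th1} applies verbatim, and your write-up is simply that reduction unpacked step by step (Lyapunov--Krasovskii functional, S-procedure, invariance principle) with $k$, $\tau_i$ replaced by $\bar{k}$, $\hat{\tau}_i$. The identifiability concern you raise about the fictitious delays in $\bar{\Xi}\setminus\Xi$ is legitimate and is not argued in the paper either; it is resolved there exactly as you suggest, by postulating Assumptions~\ref{As03} and~\ref{As5} for the extended representation \eqref{eq4_0}.
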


\begin{proof} It is clear that Theorem \ref{Th1}  is applicable to system  \eqref{eq4_3}, \eqref{eq_Th022} of the same structure as that of \eqref{eq3_2}, \eqref{eq_Th02}. Thus, by applying Theorem \ref{Th1},   the assertion of  Theorem \ref{Th2} is verified. \end{proof}

\begin{remark}
Model \eqref{eq4_0} has a rough approximation relatively to value of $\bar{k}$. Thus, an overestimated number of estimated parameters is in play, and hence,  a larger transient time is obtained. However, using the model 

\begin{equation}
\label{eq_ex001}
\begin{array}{l}
\dot{x}(t)=\sum_{i=0}^{\bar{k}_1}\bar{A}_i x(t-\hat{\tau}_i)
+\sum_{i=0}^{\bar{k}_2}\bar{D}_i \varphi(x(t-\hat{\tau}_i))
\\
~~~~~~~~+\sum_{i=0}^{\bar{k}_3}\bar{G}_i \psi(y(t-\hat{\tau}_i))
+\sum_{i=0}^{\bar{k}_4}\bar{B}_i u(t-\hat{\tau}_i),
\\
y(t)=Cx(t).
\end{array}
\end{equation}
with smaller numbers $\bar{k}_j < \bar{k}$, $j=1,...,4$ of estimated parameters allows one to reduce the number of adjustable parameters, thereby reducing the transient time of estimation of unknown parameters.
It is clear that the algorithm for model \eqref{eq_ex001} remains similar to the algorithm for model \eqref{eq4_0}.
\end{remark}

\section{Example} \label{sec5}

Let model \eqref{eq2_1} be described as
\begin{equation}
\label{eq_ex1}
\begin{array}{l}
\dot{x}(t)=
\begin{bmatrix}
0 & 1
\\
a_{01} & a_{02}
\end{bmatrix}
x(t)+
\begin{bmatrix}
0 & 0
\\
a_{11} & a_{12}
\end{bmatrix}
x(t-\tau_1)
\\
~~~~~~~~+\begin{bmatrix}
0 & 0
\\
d_{11} & d_{12}
\end{bmatrix}
\varphi(x(t-\tau_2))
\\
~~~~~~~~+
\begin{bmatrix}
0
\\
g_{0}
\end{bmatrix}
\psi(y(t))
+
\begin{bmatrix}
0
\\
b_{0}
\end{bmatrix}
u(t)
+
\begin{bmatrix}
0
\\
b_{1}
\end{bmatrix}
u(t-\tau_3),
\\
y(t)=[1~~3]x(t),
\end{array}
\end{equation}
where $x(t)=col\{x_1(t),x_2(t)\}$, the nonlinearities $\varphi(x)=col\{x_1^{\frac{1}{3}}, x_2^{\frac{1}{3}}\}$ and $\psi(y)=y^2$ are known. Only output $y(t)$ and input $u(t)$ are available for measurement. Assume that the value set of the system delays is {\it a priori} known, but it is unknown which delay corresponds to each component $x(t)$, $\varphi(x(t))$, $\psi(y(t))$, $u(t)$. Therefore, according to model \eqref{eq4_0}, rewrite \eqref{eq_ex1} in the form
\begin{equation}
\label{eq_ex2}
\begin{array}{l}
\dot{x}(t)=
\sum_{i=0}^{3}\Big(\begin{bmatrix}
0 & 1
\\
\bar{a}_{i1} & \bar{a}_{i2}
\end{bmatrix}
x(t-\hat{\tau}_i)
\\
~~~~~~~~+
\begin{bmatrix}
0 & 0
\\
\bar{d}_{i1} & \bar{d}_{i2}
\end{bmatrix}
\varphi(x(t-\hat{\tau}_i))
\\
~~~~~~~~+
\begin{bmatrix}
0
\\
\bar{g}_{i}
\end{bmatrix}
\psi(y(t-\hat{\tau}_i))
+
\begin{bmatrix}
0
\\
\bar{b}_{i}
\end{bmatrix}
u(t-\hat{\tau}_i)
\Big),
\end{array}
\end{equation}
where $\hat{\tau}_0=0$, $\hat{\tau}_1=\tau_1$, $\hat{\tau}_2=\tau_2$ and $\hat{\tau}_3=\tau_3$ due to known values of delays.
Thus, model \eqref{eq_ex2} contains any combination of delays in
\eqref{eq_ex1}.

Let $u(t)=\sin(2.3t)+\sin(10t)+\sin(20.2t)+\sin(35.7t)+\sin(51.9t)+P$, $P$ is the function describing pulse generator with amplitude 1, period 1 s and pulse wight $0.5 \%$,  $\tau_1=1$, $\tau_2=1.7$, and $\tau_3=2.3$ in \eqref{eq_ex1},
$\Gamma_i^{\bar{A}}=400I$, $\Gamma_i^{\bar{D}}=400I$, $\Gamma_i^{\bar{G}}=400I$, and $\Gamma_i^{\bar{B}}=400$,
$i=0,...,3$ in \eqref{eq_Th022}.
The simulations show that Assumption \ref{As5} holds for $C \geq 10^3$ and $\alpha \leq 10^{-4}$.
Choosing
$\bar{A}_0^0=\begin{bmatrix} 0 & 1 \\ 0 & 0 \end{bmatrix}$,
$\bar{A}_j^0=\begin{bmatrix} 0 & 0 \\ 0 & 0 \end{bmatrix}$, $j=1,2,3$,
$\bar{D}_i^0=\begin{bmatrix} 0 & 0 \\ 0 & 0 \end{bmatrix}$,
$\bar{G}_i^0=\bar{B}_i^0=\begin{bmatrix} 0 \\ 0 \end{bmatrix}$, $i=0,...,3$,
and $T_0 =[0~~1]^{\rm T}$, Assumption \ref{As6} holds.
Denote
$\kappa_i^{\bar{A}}(t)=[\hat{a}_{i1}(t), \hat{a}_{i2}(t)]$,
$\kappa_i^{\bar{D}}(t)=[\hat{d}_{i1}(t), \hat{d}_{i2}(t)]$,
$\kappa_i^{\bar{G}}(t)=\hat{g}_{i}(t)$, and
$\kappa_i^{\bar{B}}(t)=\hat{b}_{i}(t)$,
where $\hat{a}_{i1}(t)$, $\hat{a}_{i2}(t)$, $\hat{d}_{i1}(t)$, $\hat{d}_{i2}(t)$, $\hat{g}_{i}(t)$, $\hat{b}_{i}(t)$ are the estimates of $\bar{a}_{i1}$, $\bar{a}_{i2}$, $\bar{d}_{i1}$, $\bar{d}_{i2}$, $\bar{g}_{i}$, and $\bar{b}_{i}$ $i=0,...,3$ accordingly.
In Figures the transients of these estimates are presented.

\begin{figure}[h!]
\center{\includegraphics[width=0.65\linewidth]{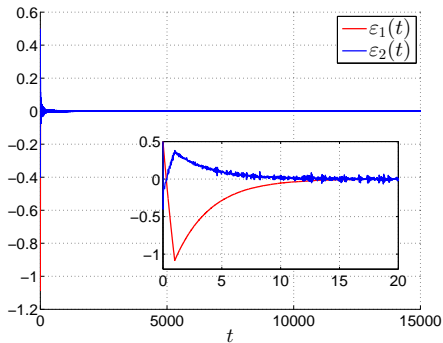}}
\caption{The transients of $\varepsilon(t)=col\{\varepsilon_1(t), \varepsilon_2(t)\}$.}
\label{Fig_A}
\end{figure}

\begin{figure}[h!]
\center{\includegraphics[width=0.65\linewidth]{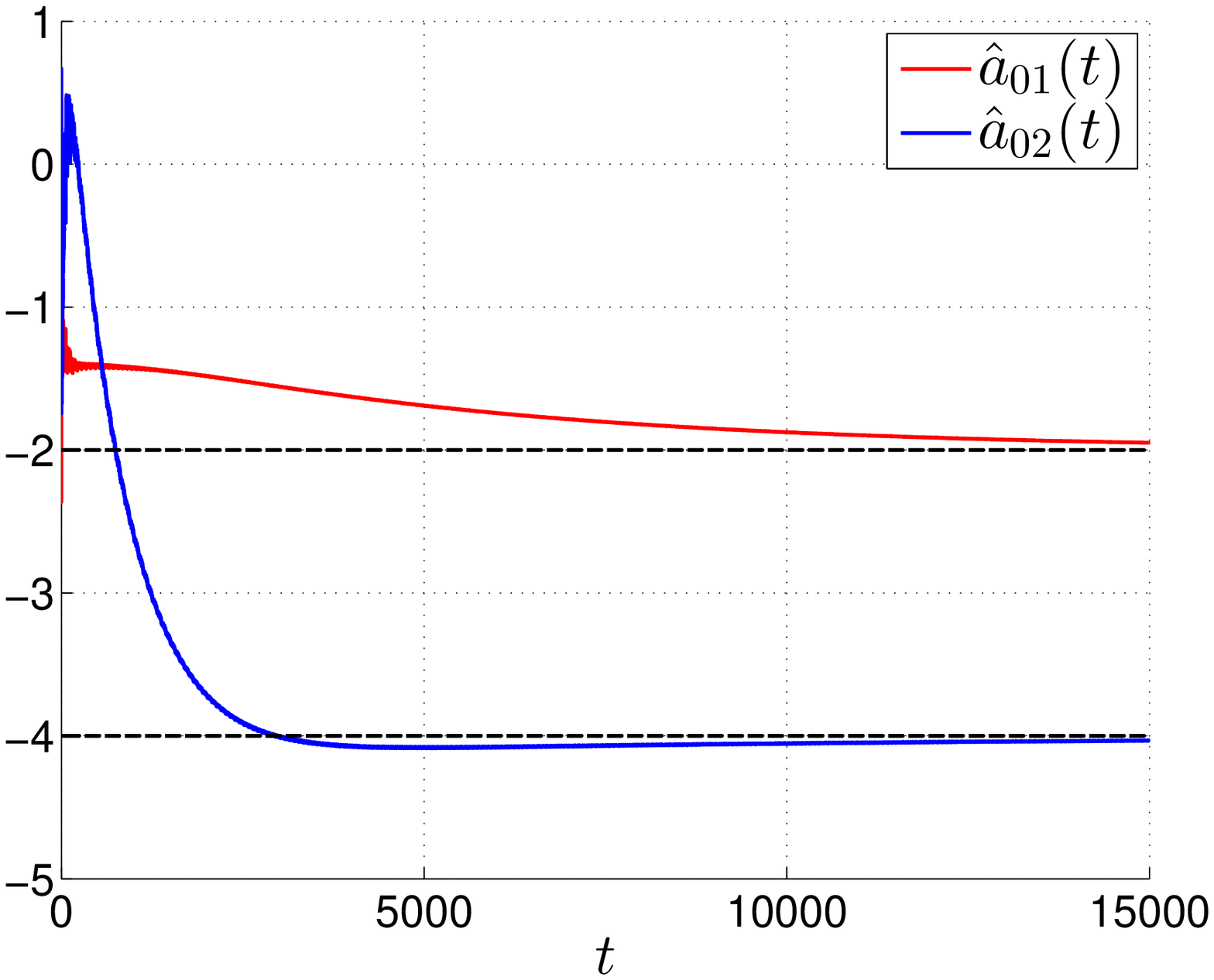}} \\
\center{\includegraphics[width=0.65\linewidth]{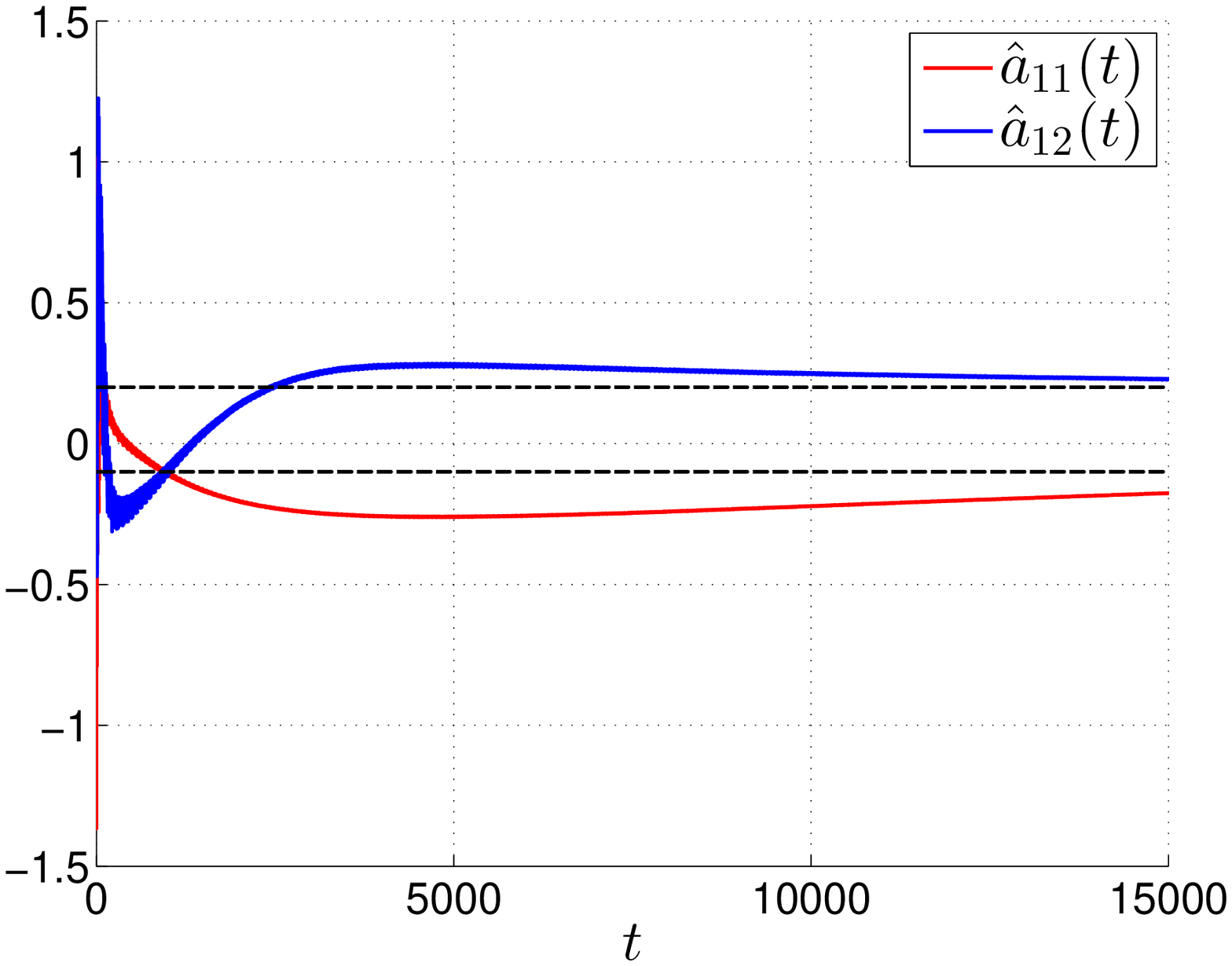}} \\
\center{\includegraphics[width=0.65\linewidth]{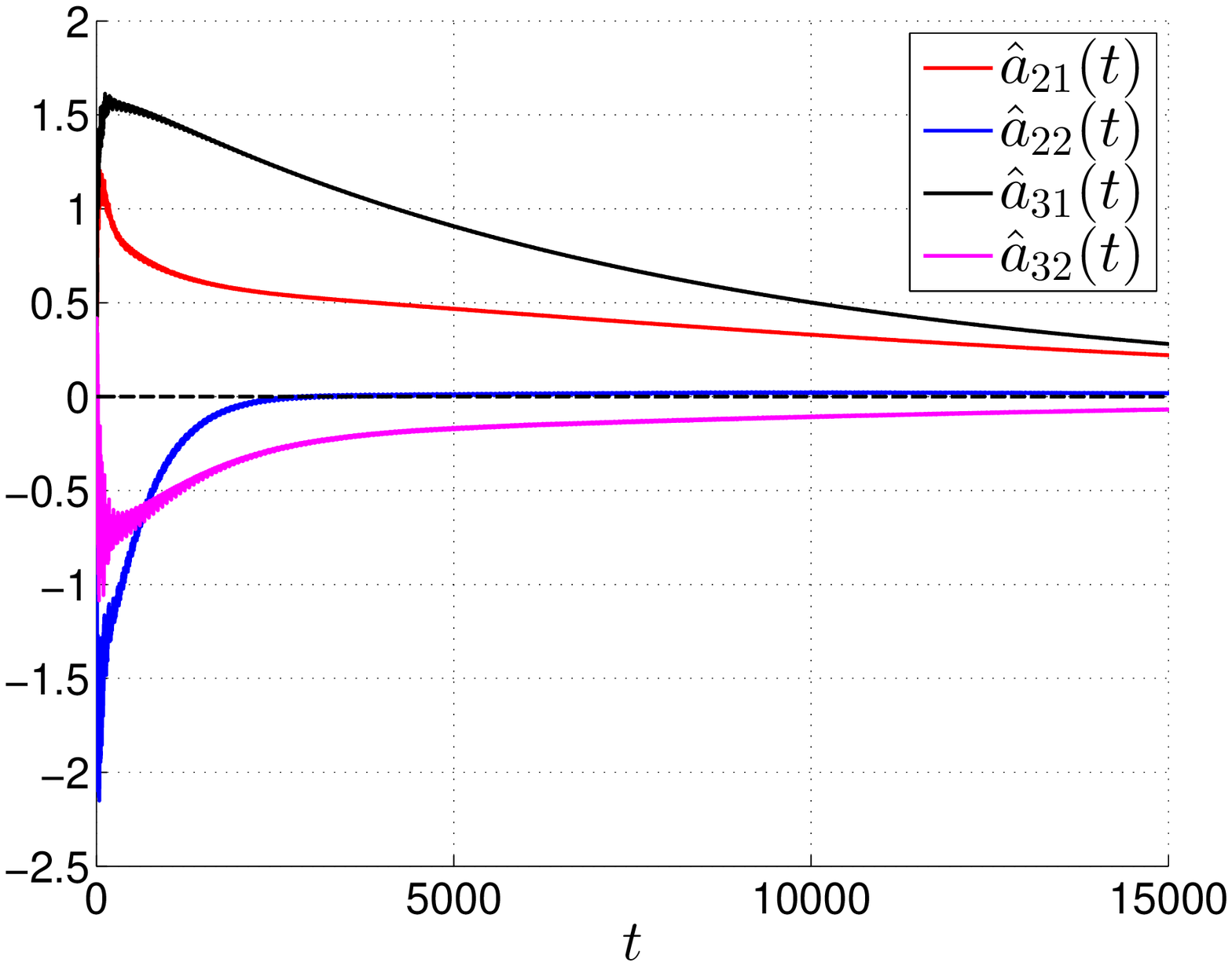}} \\
\caption{The transients of $\hat{a}_{i1}(t)$, $\hat{a}_{i2}(t)$, $i=0,...,3$, where $a_{01}=-2$, $a_{02}=-4$, $a_{11}=-0.1$, $a_{12}=0.2$, $a_{21}=a_{22}=a_{31}=a_{32}=0$.}
\label{Fig_A}
\end{figure}

\begin{figure}[h!]
\center{\includegraphics[width=0.65\linewidth]{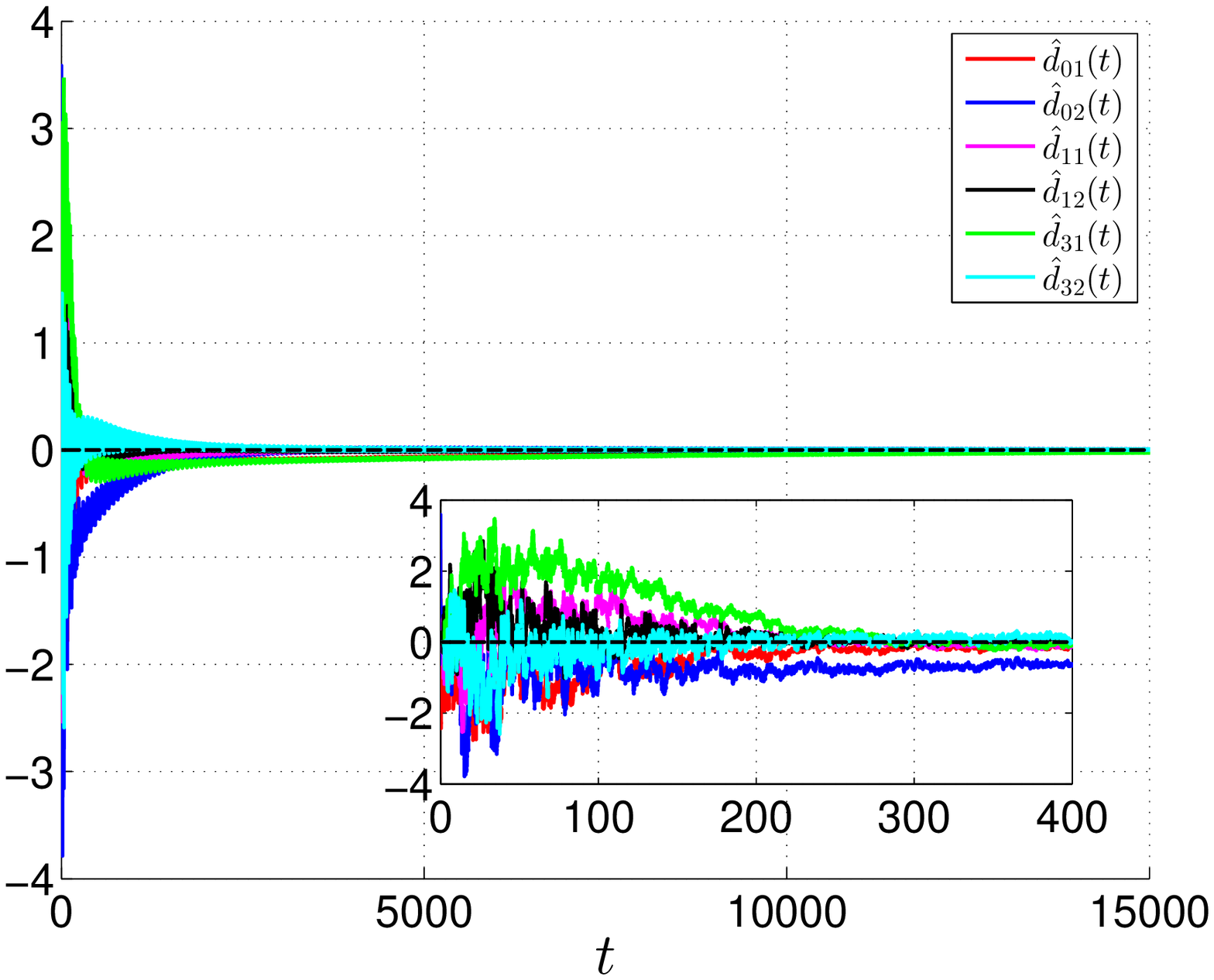}} \\
\center{\includegraphics[width=0.65\linewidth]{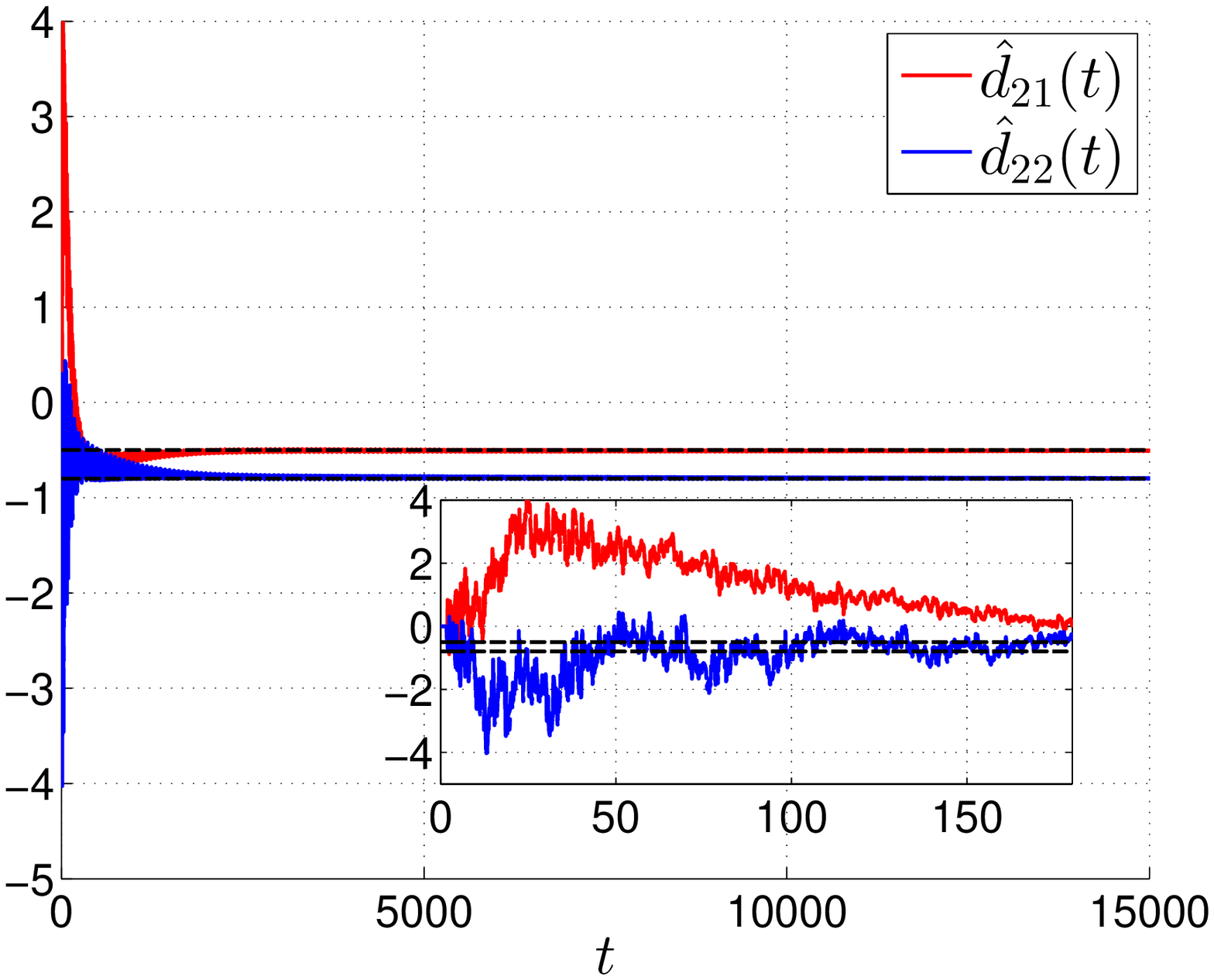}} \\
\caption{The transients of $\hat{d}_{i1}(t)$, $\hat{d}_{i2}(t)$, $i=0,...,3$, where $d_{21}=-0.5$, $d_{22}=-0.8$, $d_{01}=d_{02}=d_{11}=d_{12}=d_{31}=d_{32}=0$.}
\label{Fig_D}
\end{figure}

\begin{figure}[h!]
\center{\includegraphics[width=0.65\linewidth]{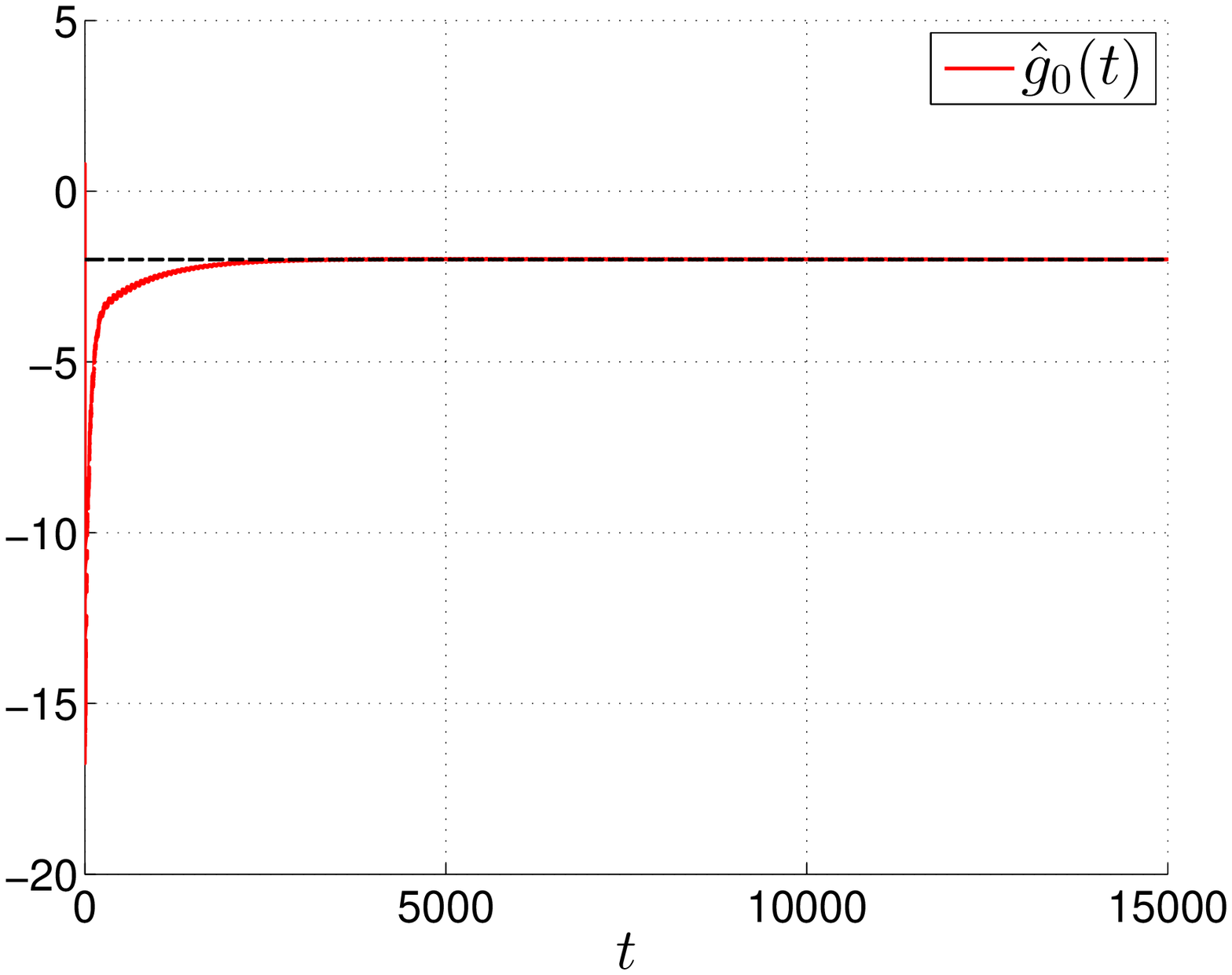}} \\
\center{\includegraphics[width=0.65\linewidth]{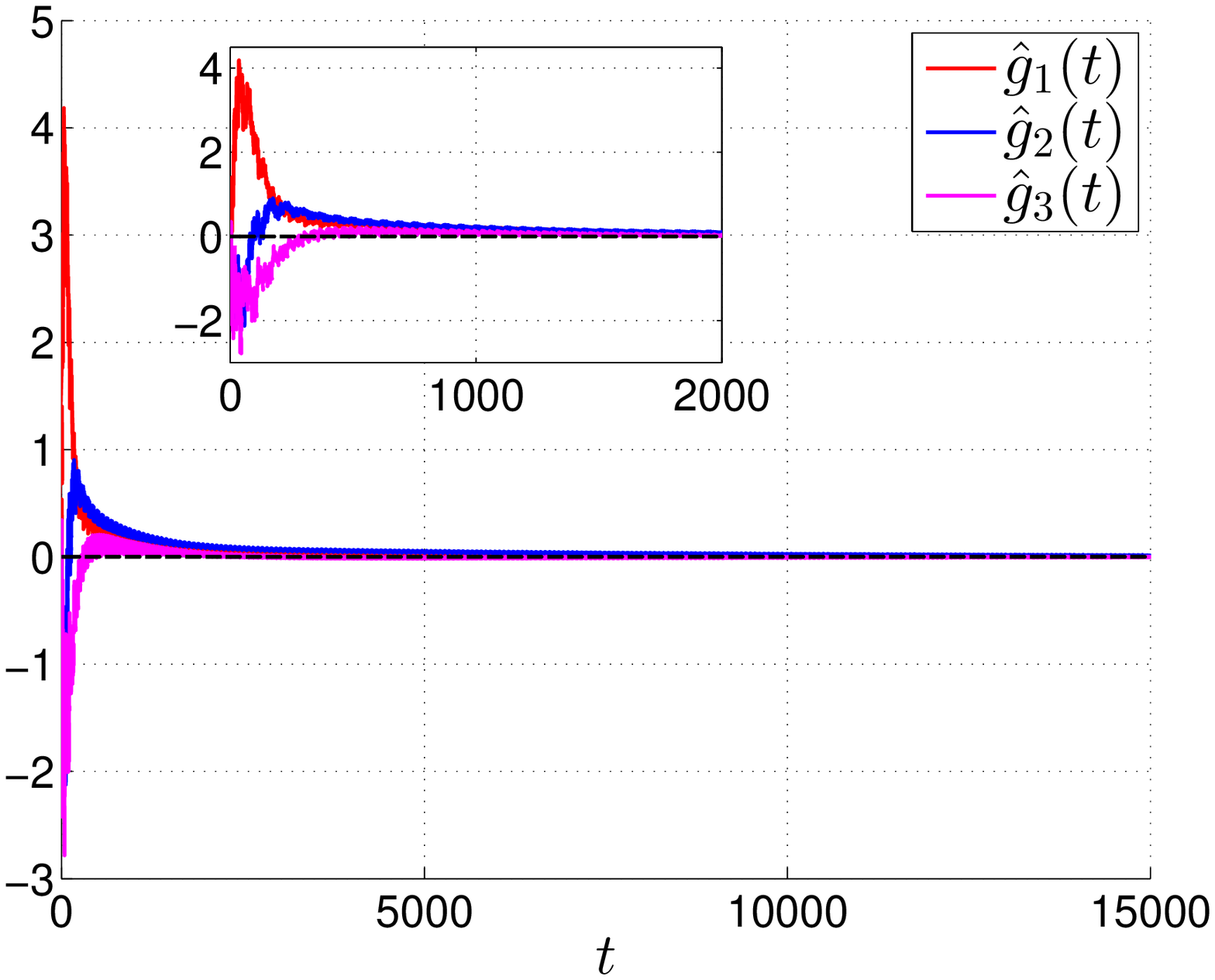}} \\
\caption{The transients of $\hat{g}_{i}(t)$, $i=0,...,3$, where $g_{0}=-2$, $g_{1}=g_{2}=g_{3}=0$.}
\label{Fig_G}
\end{figure}

\begin{figure}[h!]
\center{\includegraphics[width=0.65\linewidth]{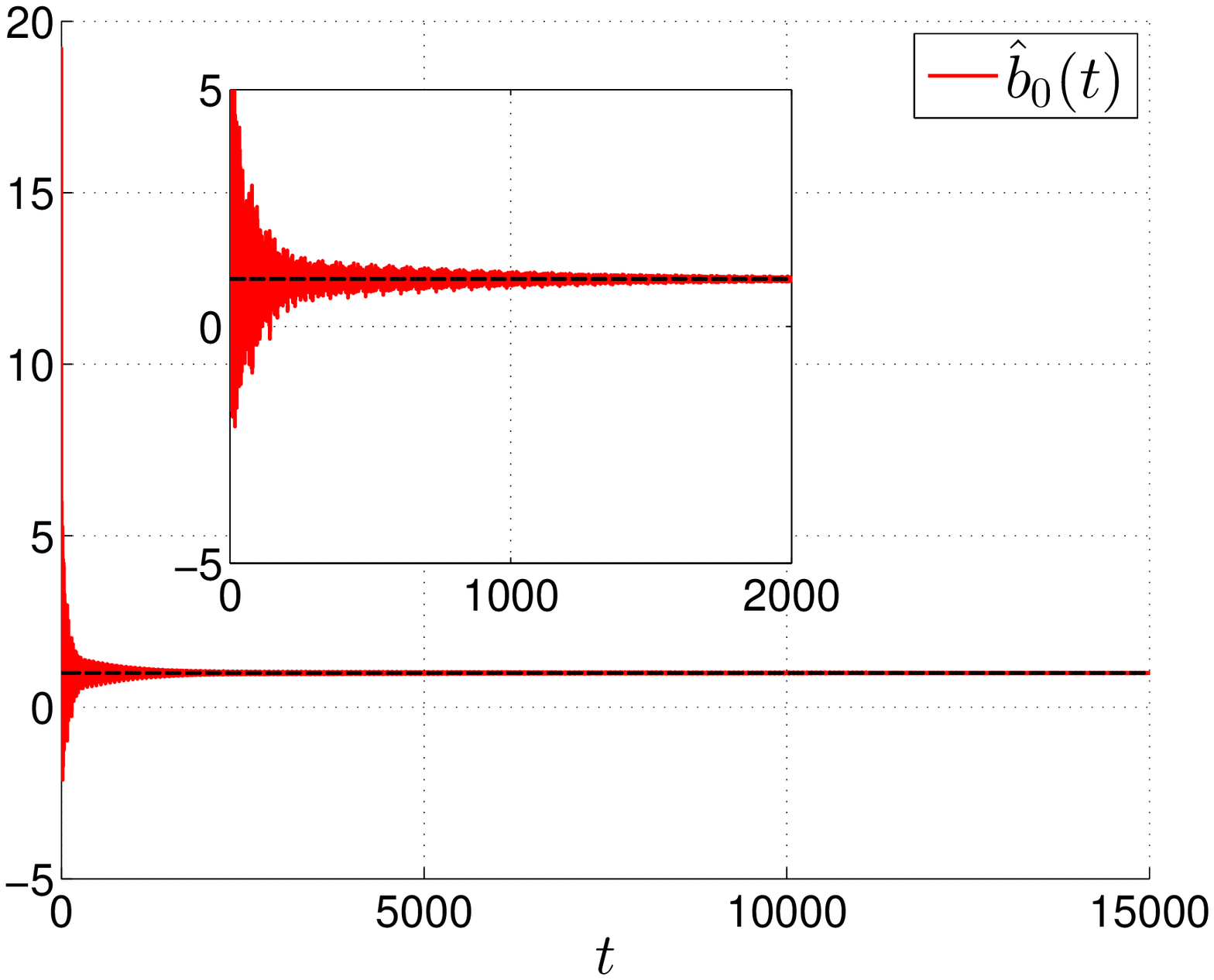}} \\
\center{\includegraphics[width=0.65\linewidth]{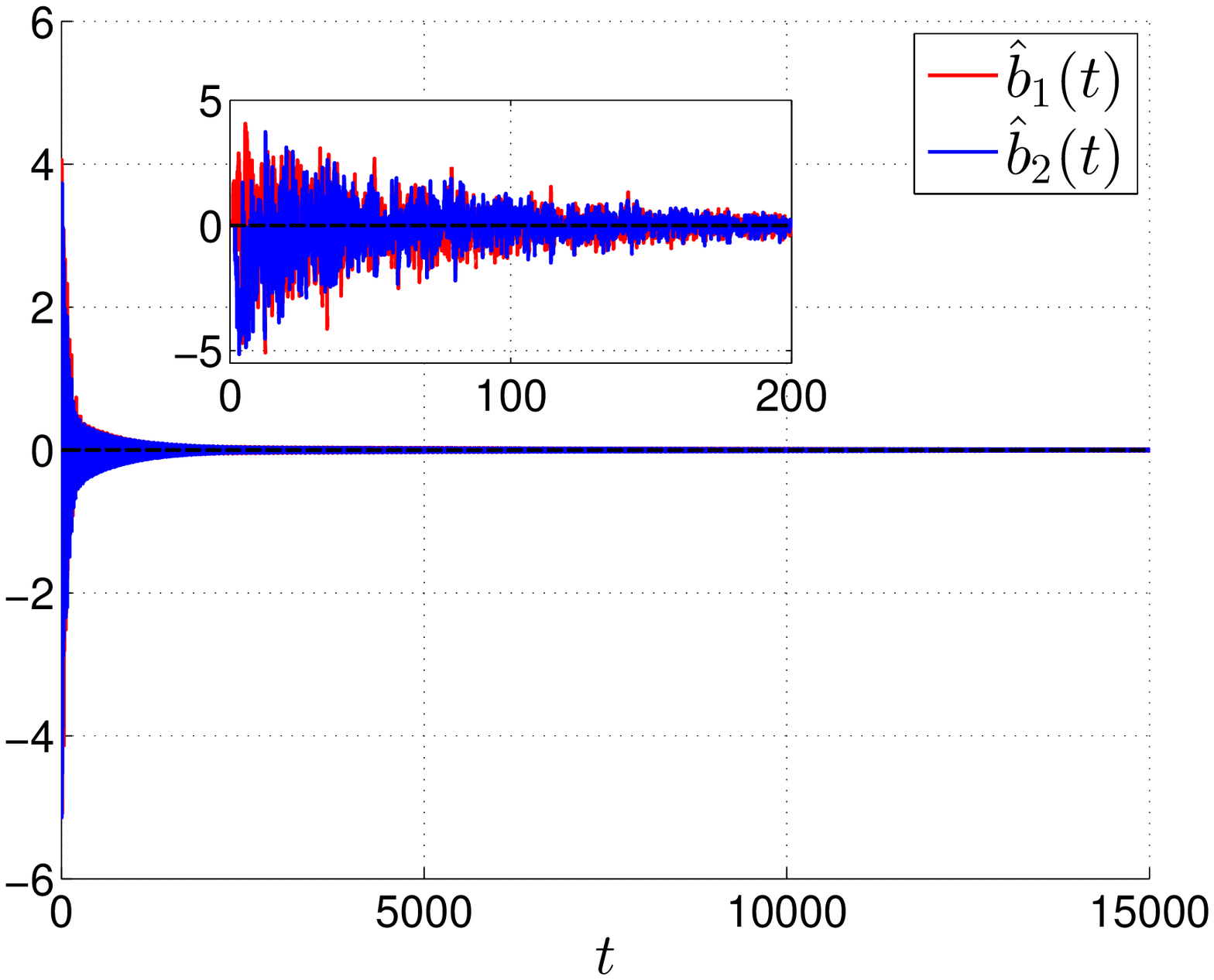}} \\
\center{\includegraphics[width=0.65\linewidth]{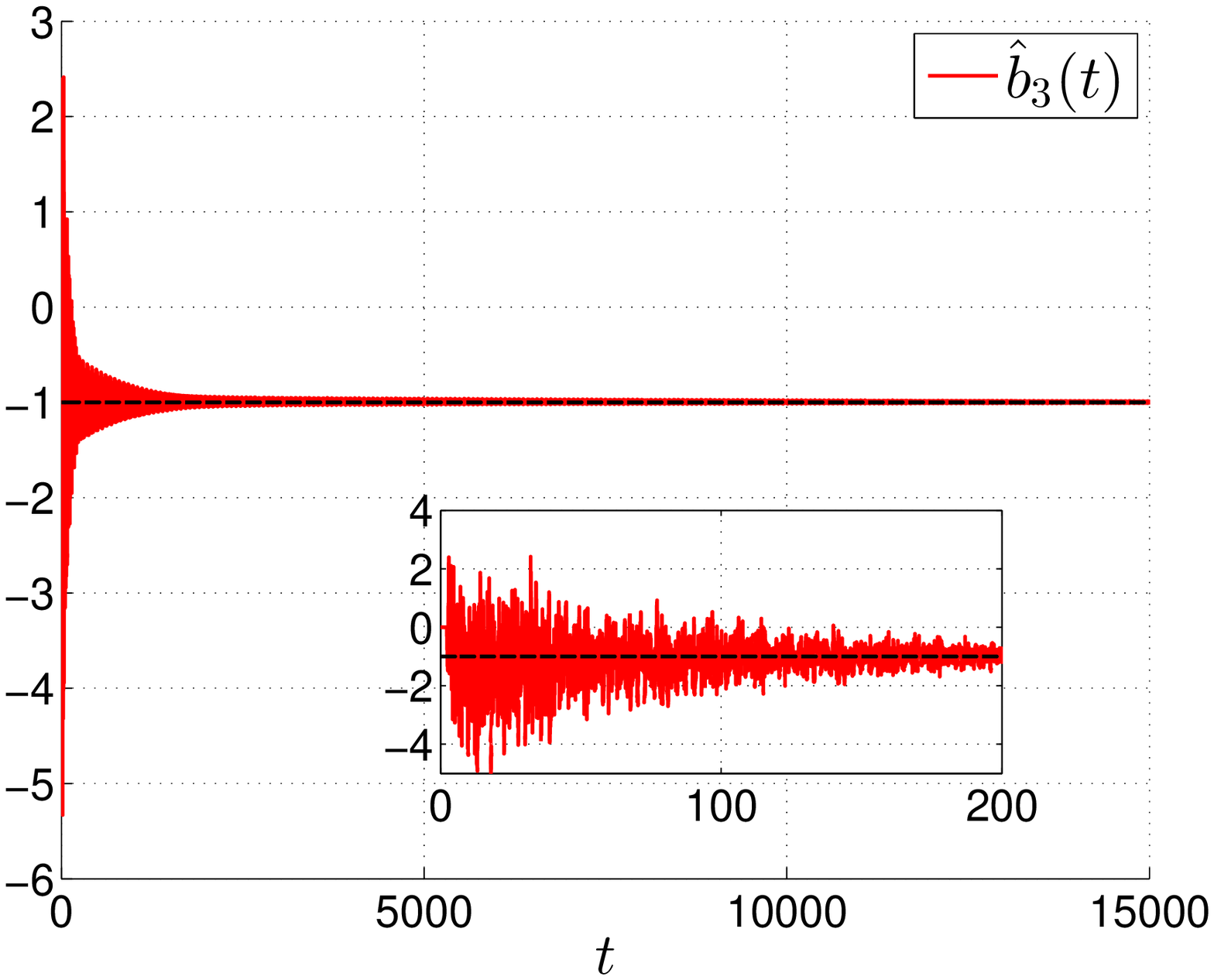}} \\
\caption{The transients of $\hat{b}_{i}(t)$, $i=0,...,3$, where $b_{0}=1$, $b_{1}=b_{2}=0$, and $b_{3}=-1$.}
\label{Fig_B}
\end{figure}

\section{Conclusions}
\label{sec6}

In the paper, a novel adaptive identifier design is proposed for nonlinear systems composed of linear part, Lipschitz and non-Lipschitz nonlinearities. The case of known time-delay values and that of unknown  delays are addressed side by side. In contrast to the existing literature, SISO time delay systems are considered in the general form rather than in the canonical form only. The  identifiability and observability properties are coupled to  the persistent excitation of the plant model to ensure  the asymptotic convergence of estimated parameters to their real values by using the gradient algorithm.
The stability analysis is given in terms of the feasibility  of certain linear matrix inequalities, relying on input and output matrices.
The numerical simulations confirm theoretical results and illustrate efficiency of the proposed algorithm for on-line simultaneous estimation of a large number of unknown parameters, including $2$ state components and $24$ parameters.

\end{document}